\documentclass[journal]{IEEEtran}

\usepackage{amsmath,amssymb}
\usepackage{cite}
\usepackage{graphicx}
\usepackage{booktabs}
\usepackage{algorithm}
\usepackage{algorithmic}
\usepackage{xcolor}
\usepackage{tikz}
\usepackage{subfig}
\usepackage{stfloats} 
\usetikzlibrary{arrows.meta,positioning,shapes.geometric}

\makeatletter
\def\@IEEENORMtitlevspace{1.2\baselineskip}
\def\@IEEEMINtitlevspace{0.9\baselineskip}
\def\section{\@startsection{section}{1}{\z@}{1.8ex plus 1ex minus 0.8ex}%
{0.6ex plus 1ex minus 0ex}{\normalfont\normalsize\centering\scshape}}%
\def\subsection{\@startsection{subsection}{2}{\z@}{1.6ex plus 1ex minus 0.8ex}%
{0.5ex plus .5ex minus 0ex}{\normalfont\normalsize\itshape}}%
\makeatother

\newtheorem{proposition}{Proposition}
\newtheorem{assumption}{Assumption}

\begin{document}

\title{Uncertainty-Disentangled Probabilistic Stability Analysis in Wind Power Integrated Weak Grids}

\author{Samson S.~Yu,~\IEEEmembership{Senior Member,~IEEE,}
	and Yinsong Chen~\IEEEmembership{}
	\thanks{S. Yu and Y. Chen are with the School of Engineering, Deakin University, Australia
		(e-mail: samson.yu@deakin.edu.au; yinsong.chen@deakin.edu.au). 
		}
}

\maketitle

\begin{abstract}
Conventional probabilistic small-signal stability analysis (PSSSA) propagates a single forecast distribution, conflating irreducible weather randomness (aleatoric) with reducible forecast-model uncertainty (epistemic). This letter propagates a second-order renewable forecast through the modal-stability map via an independent \emph{germ} variable, separating the two contributions exactly in closed form by a disentangled polynomial chaos expansion (d-PCE). The split underpins a forecast-aware $(\alpha,\beta,\gamma)$ stability certificate whose conservative branch converges to its irreducible aleatoric limit at $O(N^{-1/2})$---making a failed certificate diagnostic: epistemic-dominated risk recovers with better data; aleatoric-dominated risk needs improvements of the physical control system.
\end{abstract}
\vspace{-0.2cm}
\begin{IEEEkeywords}
Uncertainty disentanglement, wind power system, probabilistic small-signal stability
\end{IEEEkeywords}

\section{Introduction}
In converter-integrated power systems, modal damping depends on the renewable operating point, where a probabilistic forecast represents as a second-order object with two layers of opposite operational meaning \cite{rueda2009,hullermeier2021}: irreducible aleatoric weather randomness and reducible epistemic forecast-model error. Established PSSSA methods \cite{rueda2009,ni2017} collapse these into one distribution and report only total damping dispersion; the risk a better forecast could remove is hidden. Mixed aleatoric--epistemic uncertainty quantification (UQ) via \emph{germ} expansions exists elsewhere \cite{hullermeier2021,ni2017} but has not been applied to modal stability or coupled to an operational certificate. 

This letter contributes: (i)~an exact aleatoric/epistemic variance split of the modal stability margin via disentangled polynomial chaos (d-PCE, \emph{Proposition~\ref{prop:dpce}}); (ii)~a forecast-aware $(\alpha,\beta,\gamma)$ certificate with two-level finite-sample confidence bounds; and (iii)~$O(N^{-1/2})$ convergence of the conservative certificate to its aleatoric limit (\emph{Proposition~\ref{prop:converge}}), making a failed certificate diagnostic.

Operationally, the certificate provides a risk guarantee with confidence under forecast uncertainty and, by disentangling aleatoric and epistemic contributions, indicates whether instability can be mitigated by improved forecasting or requires physical system intervention.

\section{Forecast-Aware Modal-Stability Problem}

\begin{figure}[t]
\centering
\resizebox{\columnwidth}{!}{%
\begin{tikzpicture}[font=\footnotesize,>={Stealth[scale=0.9]}]
  \node[draw,rounded corners,minimum width=15mm,minimum height=7mm] (conv) {GFL conv.};
  \node[circle,fill,inner sep=1.1pt,right=10mm of conv] (pcc) {};
  \node[draw,minimum width=11mm,minimum height=6mm,right=8mm of pcc] (z) {$Z_{\mathrm{th}}$};
  \node[draw,circle,minimum size=6mm,right=17mm of z] (grid) {$\sim$};
  \draw (conv) -- node[above]{$P_w$} (pcc);
  \draw (pcc) -- (z);
  \draw (z) -- node[below]{SCR$=1.2$} (grid);
  \node[below=0.5mm of pcc] {PCC};
  \node[below=0.5mm of grid] {grid};
  \node[below=0.5mm of conv] {wind farm};
  \node[draw,dashed,rounded corners,below=9mm of conv,minimum width=15mm] (fc) {forecast $(u,\varphi)$};
  \node[right=6mm of fc] (xi) {$P_w{=}\xi{=}T(u,\varphi)$};
  \node[right=6mm of xi] (A) {$A(\xi)$};
  \node[right=5mm of A] (lam) {$\lambda(\xi)$};
  \node[right=5mm of lam] (g) {$g(\xi)$};
  \draw[->] (fc) -- (xi);
  \draw[->] (xi) -- (A);
  \draw[->] (A) -- (lam);
  \draw[->] (lam) -- (g);
  \draw[->,dashed] (xi) -- (conv);
\end{tikzpicture}}
\caption{Weak-grid test system (top) and the forecast-to-margin chain (bottom): the forecast \emph{germ} $u$ and model posterior $\varphi$ set the operating input $P_w=\xi=T(u,\varphi)$, which fixes the state matrix $A(\xi)$, its eigenvalues $\lambda(\xi)$, and stability margin $g(\xi)$. PCC, point of common coupling; SCR, short-circuit ratio.}
\label{fig:system}
\end{figure}

\subsection{Second-Order Forecast and Aleatoric Germ}
A probabilistic forecaster splits the wind operating driver $\xi$ ($P_w$) into two layers with opposite operational meaning \cite{chen2026}. The \emph{aleatoric} weather variance $\sigma_w^2$ is a stable climatological property, estimated once from an abundant historical record and treated as known and fixed (independent of the forecast training size). The \emph{epistemic} layer is the forecast model's uncertainty about the conditional \emph{mean} $\varphi=\mu(x_\star)$ at the forecast feature $x_\star$, given limited recent training data $D_N$, expressed as a posterior $\varphi\sim q(\varphi)=p(\varphi\mid D_N)$ that contracts as $N$ grows. The predictive law is $\xi\mid\varphi\sim\mathcal N(\varphi,\sigma_w^2)$, and conventional PSSSA is the degenerate case $q=\delta_{\varphi_0}$, a Dirac point mass at a single assumed mean $\varphi_0$.

To propagate this hierarchy we use an \emph{independent germ}, since the conditional law of $\xi$ depends on $\varphi$ and a direct tensor expansion in $(\xi,\varphi)$ would be invalid.
\begin{assumption}[Germ representation]\label{ass:germ}
There exists a standard-normal \emph{germ} $u\sim\mathcal N(0,1)$ with law $\rho(u)$ independent of $\varphi$, and a map $T$, such that $\xi=T(u,\varphi)$ has the predictive law. Here $T(u,\varphi)=\varphi+\sigma_w u$ with $\sigma_w$ the fixed climatological aleatoric scale; for a non-Gaussian predictive, $T(u,\varphi)=F_{\xi\mid\varphi}^{-1}(\Phi(u))$ is the conditional Rosenblatt map ($\Phi$ is the standard-normal cumulative distribution function, CDF).
\end{assumption}
Under \emph{Assumption~\ref{ass:germ}}, $u$ is perpendicular to $\varphi$, i.e., $u\perp\varphi$, ensuring the subsequent disentanglement exact. Measurement and process noise are carried by a second independent aleatoric \emph{germ} $u_\varepsilon\sim\mathcal N(0,1)$ with $\varepsilon=\sigma_\varepsilon u_\varepsilon$, where $\sigma_\varepsilon$ is the noise standard deviation.

\subsection{Modal Stability Margin}
We use a reduced-order grid-following converter model (Fig.~\ref{fig:system}) whose small-signal dynamics, linearised about the operating point fixed by $\xi$, are
\begin{equation}
	\Delta\dot{\mathbf x}=A(\xi)\,\Delta\mathbf x,\qquad \lambda_k(\xi)\in\mathrm{spec}\,A(\xi),
	\label{eq:state}
\end{equation}
where the state vector $\mathbf x$ collects the converter dynamic states (phase-locked-loop phase and frequency, current-control and output-filter states) and $A(\xi)$ is the state matrix at operating input $\xi$; the dominant weak-grid mode is the PLL-synchronisation mode, whose damping degrades as the short-circuit ratio falls~\cite{yang2021}. The propagation and certificate treat $A(\xi)$ as a black box, so any detailed electromagnetic-transient or library model can be substituted unchanged. For mode $k$, $\lambda_k(\xi)=\sigma_k+j\omega_k$ has damping ratio $\zeta_k=-\sigma_k/\sqrt{\sigma_k^2+\omega_k^2}$. Following standard practice, stability is an \emph{adequacy} criterion requiring all modes to satisfy minimum thresholds on damping ratio ($\zeta_{\min}$) and decay rate ($\eta_{\min}>0$), rather than enforcing the decay rate alone. To avoid dependence on an arbitrary scaling constant, the two criteria are expressed as \emph{dimensionless relative margins} and the binding margin is
\begin{equation}
	g(\xi)=\min_{k\in\mathcal K}\min\!\Big(\tfrac{\zeta_k-\zeta_{\min}}{\zeta_{\min}},\ \tfrac{-\sigma_k-\eta_{\min}}{\eta_{\min}}\Big),\quad \mathcal K=\{k:|\lambda_k|>0\},
	\label{eq:margin}
\end{equation}
where the inner minimum is over the two adequacy criteria (damping ratio and decay rate) and the outer over modes $k$, so the instability \emph{event} $\{g(\xi)+\varepsilon<0\}$ is invariant to the choice of normalising constants $\zeta_{\min},\eta_{\min}$. The realised margin is $\tilde m=g(\xi)+\varepsilon$; the map $g$ is deterministic, nonlinear and non-smooth (a minimum over modes).

\section{Disentangled Uncertainty Propagation}

\subsection{Exact Aleatoric/Epistemic Split via the Germ}
Write the structural margin as $G(u,\varphi)=g(T(u,\varphi))$ with $u\perp\varphi$. The law of total variance gives the exact identity
\begin{equation}
	\underbrace{\mathrm{Var}[G]}_{\mathrm{TU}}
	=\underbrace{\mathrm{Var}_{\varphi}\!\big(\mathbb{E}_u[G\mid\varphi]\big)}_{\mathrm{EU}}
	+\underbrace{\mathbb{E}_{\varphi}\!\big[\mathrm{Var}_u(G\mid\varphi)\big]}_{\mathrm{AU}}.
	\label{eq:ltv}
\end{equation}
Here $\mathbb E_u,\mathrm{Var}_u$ denote expectation and variance over the \emph{germ} $u$ at fixed $\varphi$, while $\mathbb E_\varphi,\mathrm{Var}_\varphi$ are taken over the posterior $q(\varphi)$ and $\mathrm{Var}[G]$ is the total variance over both. The margin's total uncertainty (TU) splits into the \emph{epistemic uncertainty} (EU)---the variance of the conditional mean across the posterior---and the \emph{aleatoric uncertainty} (AU)---the posterior-averaged conditional variance.

\subsection{Disentangled Polynomial Chaos Expansion (d-PCE)}
Let $\{\Psi_a(u)\}_{a\ge0}$ be orthonormal with respect to the fixed \emph{germ} law $\rho(u)$ ($\Psi_0\!=\!1$) and $\{\Phi_b(\varphi)\}_{b\ge0}$ orthonormal with respect to $q(\varphi)$ ($\Phi_0\!=\!1$). Because $u\perp\varphi$, the products $\Psi_a\Phi_b$ are orthonormal under the joint law, and
\begin{equation}
	G(u,\varphi)=\sum_{a,b}d_{ab}\,\Psi_a(u)\,\Phi_b(\varphi),
	\label{eq:dpce}
\end{equation}
with d-PCE coefficients $d_{ab}=\mathbb E[G\,\Psi_a(u)\Phi_b(\varphi)]$ and conditional moments $\mu(\varphi)\!=\!\mathbb E_u[G\mid\varphi]\!=\!\sum_b d_{0b}\Phi_b(\varphi)$ and $v(\varphi)\!=\!\mathrm{Var}_u(G\mid\varphi)\!=\!\sum_{a\ge1}(\sum_b d_{ab}\Phi_b(\varphi))^2$.

\begin{proposition}[Disentanglement]\label{prop:dpce}
Under \emph{Assumption~\ref{ass:germ}} and orthonormality, the converged expansion \eqref{eq:dpce} gives
\begin{equation}
	\mathrm{EU}=\!\sum_{b\ge1}\!d_{0b}^2,\quad
	\mathrm{AU}=\!\sum_{a\ge1}\sum_{b\ge0}\!d_{ab}^2,\quad
	\mathrm{TU}\!=\!\!\!\sum_{(a,b)\neq(0,0)}\!\!\!d_{ab}^2 .
	\label{eq:closedform}
\end{equation}
A third quantity, $\mathrm{Var}_\varphi(v(\varphi))$, is the epistemic uncertainty \emph{about the aleatoric spread}, distinct from both EU and AU and relevant to the certificate below.
\end{proposition}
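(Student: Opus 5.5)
The plan is to read everything off Parseval's identity for the tensor basis, after first fixing the setting. We assume $G\in L^2(\rho\otimes q)$, i.e.\ $\mathbb E[G^2]<\infty$, which is implicit in "converged expansion". By \emph{Assumption~\ref{ass:germ}}, $u\perp\varphi$, so the joint law is the product $\rho\otimes q$. Each family $\{\Psi_a\}$ and $\{\Phi_b\}$ is assumed complete in its marginal $L^2$ space; this holds for Hermite polynomials under $\rho$, and for the $\Phi_b$ when $q$ is determined by its moments. Under these conditions the products $\{\Psi_a\Phi_b\}$ form a complete orthonormal system of $L^2(\rho\otimes q)$. Orthonormality comes from Fubini: $\mathbb E[\Psi_a\Phi_b\Psi_{a'}\Phi_{b'}]=\delta_{aa'}\delta_{bb'}$. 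Completeness is the standard tensor-product fact. Consequently \eqref{eq:dpce} converges in $L^2$, and Parseval gives $\mathbb E[G^2]=\sum_{a,b}d_{ab}^2$.

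\textbf{TU.} Since $\Psi_0\Phi_0=1$, we have $d_{00}=\mathbb E[G]$. Hence
\[
\mathrm{Var}[G]=\mathbb E[G^2]-d_{00}^2=\sum_{(a,b)\neq(0,0)}d_{ab}^2 .
\]

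\textbf{EU.} Integrate the expansion over $u$ at fixed $\varphi$. Orthonormality with $\Psi_0=1$ gives $\mathbb E_u[\Psi_a]=\delta_{a0}$, so $\mu(\varphi)=\sum_b d_{0b}\Phi_b(\varphi)$. One must justify exchanging the sum and the conditional expectation. The cleanest route is to note that conditional expectation is the $L^2$-orthogonal projection onto functions of $\varphi$; it is therefore continuous, and it maps $\Psi_a\Phi_b$ to $\delta_{a0}\Phi_b$. Parseval in $L^2(q)$ then yields $\mathrm{Var}_\varphi(\mu)=\sum_{b\ge1}d_{0b}^2$.

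\textbf{AU.} Take the expectation of $v$ directly. Write $c_a(\varphi)=\sum_b d_{ab}\Phi_b(\varphi)$, so that $G(\cdot,\varphi)=\sum_a c_a(\varphi)\Psi_a$ in $L^2(\rho)$ for $q$-a.e.\ $\varphi$; this follows from Fubini, since $\int\!\!\int G^2\,d\rho\,dq<\infty$. Parseval in $L^2(\rho)$ then gives $v(\varphi)=\sum_{a\ge1}c_a(\varphi)^2$. Taking $\mathbb E_\varphi$, using monotone convergence to exchange the sum and the expectation, and applying Parseval in $L^2(q)$ to each $c_a$ gives $\mathrm{AU}=\sum_{a\ge1}\sum_b d_{ab}^2$.

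As a consistency check, EU plus AU partitions the index set $\{(a,b)\neq(0,0)\}$ into $\{a=0,b\ge1\}$ and $\{a\ge1\}$. This reproduces \eqref{eq:ltv}, so one could also derive AU as TU minus EU.

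The remark about $\mathrm{Var}_\varphi(v(\varphi))$ needs only two observations. First, $v$ is a well-defined $L^1(q)$ function; it lies in $L^2(q)$ when $G$ has finite fourth moments. Second, its variance is in general neither EU nor AU. A two-term example shows this. Take $G=d_{10}\Psi_1+d_{11}\Psi_1\Phi_1$ with $\Phi_1$ nonconstant. Then $\mu\equiv0$, so EU $=0$, while $v=(d_{10}+d_{11}\Phi_1)^2$ has positive variance. The quantity $\mathrm{Var}_\varphi(v)$ is a quartic expression in the coefficients $d_{ab}$ and is not a partial sum of squares.

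The main obstacle is the functional-analytic justification rather than the algebra. Specifically, it requires completeness of $\{\Phi_b\}$ in $L^2(q)$ and the interchange of infinite sums with the conditional operators. The first thing I would try is to work purely in $L^2$, via projections and Fubini, rather than pointwise. If completeness of the $\Phi_b$ is in doubt, the formulas still hold with "converged expansion" interpreted as $G$ lying in the closed span, which is exactly the hypothesis of the proposition.
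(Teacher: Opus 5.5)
Your proposal is correct and follows the paper's own route: use $u\perp\varphi$ and orthonormality to read off $\mu(\varphi)=\sum_b d_{0b}\Phi_b$ and $v(\varphi)=\sum_{a\ge1}\bigl(\sum_b d_{ab}\Phi_b\bigr)^2$, then take $\mathrm{Var}_\varphi$ and $\mathbb E_\varphi$. It adds what the sketch leaves implicit: the $L^2$/completeness justification of the interchanges, the TU line via Parseval, and an actual argument for the $\mathrm{Var}_\varphi(v)$ remark. One small tightening there: take $d_{11}\neq0$ and, e.g., Gaussian $q$ with linear $\Phi_1$, since ``$\Phi_1$ nonconstant'' alone does not make $(d_{10}+d_{11}\Phi_1)^2$ nonconstant (a symmetric two-point $q$ with $d_{10}=0$ gives $\Phi_1^2\equiv1$).
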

\noindent\emph{Proof sketch.} Orthonormality and $u\perp\varphi$ give $\mathrm{EU}=\mathrm{Var}_\varphi(\sum_b d_{0b}\Phi_b)=\sum_{b\ge1}d_{0b}^2$ and $\mathrm{AU}=\mathbb E_\varphi[\sum_{a\ge1}(\sum_b d_{ab}\Phi_b)^2]=\sum_{a\ge1}\sum_b d_{ab}^2$. $\hfill\square$

\emph{Finite-ensemble realisation.} The spectral form \eqref{eq:closedform} requires fitting an orthonormal $\varphi$-basis. With a finite deep ensemble we instead use the equivalent \emph{empirical} estimator (\textbf{Algorithm~\ref{alg:cert}}): for each of the $M_\varphi$ posterior draws $\{\varphi_i\}$, we fit only the inner \emph{germ} expansion in $u$ and estimate the components by sample averaging over the ensemble,
\begin{equation}
	\widehat{\mathrm{AU}}=\tfrac1{M_\varphi}\!\sum_i v(\varphi_i),\quad
	\widehat{\mathrm{EU}}=\tfrac1{M_\varphi-1}\!\sum_i\big(\mu(\varphi_i)-\bar\mu\big)^2 ,
	\label{eq:empirical}
\end{equation}
with $\bar\mu$ the ensemble mean and $\widehat{\mathrm{Var}}_\varphi(v)=\tfrac1{M_\varphi-1}\sum_i(v(\varphi_i)-\bar v)^2$ ($\bar v$ the mean of $v(\varphi_i)$); the inner moments use $1$-$D$ \emph{germ} quadrature so $\widehat{\mathrm{EU}}$ has no nested-sampling bias.

\emph{Error control.} The expansion is truncated and fitted, so it carries truncation/regression error; on the non-smooth margin \eqref{eq:margin} a global expansion converges slowly, and we use a sparse least-angle-regression construction \cite{ni2017}. We report the leave-one-out error $e_{\mathrm{LOO}}$, set the surrogate certification margin $e_{\mathrm{sur}}$ to the maximum validation-grid residual, and cross-check $\widehat{\mathrm{AU}},\widehat{\mathrm{EU}}$ against an independent nested Monte~Carlo.

\subsection{Conditional Instability Probability}
The per-model (aleatoric) instability probability is the tail
\begin{equation}
	p_{\mathrm A}(\varphi)=\mathbb P_{u,u_\varepsilon}\big(G(u,\varphi)+\sigma_\varepsilon u_\varepsilon<0\big),
	\label{eq:pA}
\end{equation}
where $\mathbb P_{u,u_\varepsilon}$ is the probability over the germs $u$ and $u_\varepsilon$; this tail is \emph{not} determined by $\mu(\varphi),v(\varphi)$ alone and is evaluated by sampling the \emph{germs} through the cheap surrogate \eqref{eq:dpce}. A one-sided distribution-free upper bound (Cantelli) is available as a fast conservative screen,
\begin{equation}
	p_{\mathrm A}(\varphi)\le\frac{v(\varphi)+\sigma_\varepsilon^2}{v(\varphi)+\sigma_\varepsilon^2+\mu(\varphi)^2}\ \ \text{if }\mu(\varphi)>0,
	\label{eq:cantelli}
\end{equation}
the conservative certificate (Sec.~\ref{sec:certificate}) evaluates the tail with the sampled \eqref{eq:pA} for tightness, and can use the sampling-free screen \eqref{eq:cantelli} to certify easy models and preserve the $1-\gamma$ confidence.

\section{Stability Certificate and Forecasting Value}
\label{sec:certificate}

\subsection{Optimistic and Conservative Stability Certificates}
The predictive instability probability is $p_{\mathrm{pred}}=\mathbb E_\varphi[p_{\mathrm A}(\varphi)]$, which is estimated over the forecast posterior. 

We define two decision rules. 

\emph{Optimistic Certificate:} optimistically stable at risk $\alpha$ if $p_{\mathrm{pred}}\le\alpha$.

\emph{Conservative $(\alpha,\beta,\gamma)$ Certificate:} let the per-model safe event be $S(\varphi)=\{p_{\mathrm A}(\varphi)\le\alpha\}$, defined on the scale-invariant event \eqref{eq:pA}. From $M_\varphi$ posterior draws, $\hat p_{\mathrm{con}}=\tfrac1{M_\varphi}\sum_i\mathbf 1[\hat S(\varphi_i)]$, and we declare conservative stability when the one-sided Clopper--Pearson \emph{lower} bound on this fraction clears the target,
\begin{equation}
	\underline q\big(\hat p_{\mathrm{con}},M_\varphi;\,1-\gamma\big)\ \ge\ 1-\beta ,
	\label{eq:conservative}
\end{equation}
where $\underline q(\cdot)$ is the one-sided Clopper--Pearson lower confidence bound on the safe fraction. Two finite-sample caveats qualify the \emph{conservative certificate}: (i) \emph{Confidence:} exact evaluation of $\hat S$ (e.g.\ Cantelli screen \eqref{eq:cantelli}) gives confidence $1-\gamma$, while \emph{germ} sampling at $1-\beta_R$ per model yields overall confidence $\ge\!1-\gamma-M_\varphi\beta_R$ via union bound \cite{garatti2025}; and (ii) \emph{Surrogate margin:} certifying $\{\hat g+\varepsilon<e_{\mathrm{sur}}\}$ in \eqref{eq:pA} with the margin $e_{\mathrm{sur}}$ extends coverage toward the true $g$ ($e_{\mathrm{LOO}}$ is a fit diagnostic). The complete procedure is summarised in \textbf{Algorithm~\ref{alg:cert}}.

\begin{algorithm}[t]
	\small
	\caption{Disentangled Certification of Modal Stability}
	\label{alg:cert}
	\begin{algorithmic}[1]
		\vspace{-0.05cm}
		\STATE Fit probabilistic forecaster; obtain posterior draws $\{\varphi_i\}_{i=1}^{M_\varphi}$ and \emph{germ} map $T$.
		\STATE Per member, fit \emph{germ} d-PCE \eqref{eq:dpce}; record $e_{\mathrm{LOO}}$, set $e_{\mathrm{sur}}$.
		\STATE Read $\mu(\varphi_i),v(\varphi_i)$; form $\widehat{\mathrm{AU}},\widehat{\mathrm{EU}},\widehat{\mathrm{Var}}_\varphi(v)$ \eqref{eq:empirical}.
		\FOR{$i=1,\ldots,M_\varphi$}
		\STATE Sample \emph{germs}; estimate $p_{\mathrm A}(\varphi_i)$ \eqref{eq:pA} with margin $e_{\mathrm{sur}}$; set $\hat S(\varphi_i)$.
		\ENDFOR
		\STATE $p_{\mathrm{pred}}\!\gets\!\tfrac1{M_\varphi}\sum_i p_{\mathrm A}(\varphi_i)$;\ \ $\hat p_{\mathrm{con}}\!\gets\!\tfrac1{M_\varphi}\sum_i\mathbf 1[\hat S(\varphi_i)]$.
		\STATE Optimistically stable if $p_{\mathrm{pred}}\le\alpha$; conservatively stable if \eqref{eq:conservative} holds.
	\end{algorithmic}
		\vspace{-0.05cm}
\end{algorithm}
	\vspace{-0.1cm}

\subsection{Forecasting--Stability Coupling}
Let the certificate slack of model $\varphi$ be $r(\varphi)=\alpha-p_{\mathrm A}(\varphi)$, so $S(\varphi)=\{r(\varphi)\ge0\}$; the $\beta$-quantile $r^\beta=\inf\{t:\mathbb P_\varphi(r\le t)\ge\beta\}$ is what \eqref{eq:conservative} must clear. As $N\to\infty$ the posterior contracts, $q\to\delta_{\varphi^\star}$, and the conservative certificate converges to the aleatoric certificate $\{r(\varphi^\star)\ge0\}$, the irreducible floor set by the weather alone.

\begin{proposition}[Convergence and rate]\label{prop:converge}
Assume posterior contraction $\mathrm{Var}_q(\varphi)=O(1/N)$, with $O(\cdot)$ the asymptotic order as $N\to\infty$, and that $\varphi\mapsto p_{\mathrm A}(\varphi)$ is Lipschitz near $\varphi^\star$. Then $\mathrm{Var}_\varphi(r)=O(1/N)$ and the gap $\Delta(N)$ between the conservative threshold and its aleatoric limit closes as
\begin{equation}
	\Delta(N)=O\!\big(\mathrm{std}_\varphi(r)\big)=O\!\big(N^{-1/2}\big).
	\label{eq:rate}
\end{equation}
\end{proposition}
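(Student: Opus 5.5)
The plan is to transfer the posterior contraction through the Lipschitz map to $r$, and then control the $\beta$-quantile by Chebyshev/Cantelli.

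\emph{Variance of the slack.} Since $r(\varphi)=\alpha-p_{\mathrm A}(\varphi)$, the constant $\alpha$ drops out and $\mathrm{Var}_\varphi(r)=\mathrm{Var}_q(p_{\mathrm A}(\varphi))$. Let $L$ be the Lipschitz constant of $p_{\mathrm A}$ on a neighbourhood $U$ of $\varphi^\star$. I will use the bound $\mathrm{Var}(Y)\le\mathbb E(Y-c)^2$ with $c=p_{\mathrm A}(\varphi^\star)$. On $U$ this gives $(p_{\mathrm A}(\varphi)-c)^2\le L^2(\varphi-\varphi^\star)^2$. Off $U$, $p_{\mathrm A}\in[0,1]$, so the contribution is at most $q(U^c)$.

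Hence
\[
\mathrm{Var}_\varphi(r)\le L^2\,\mathbb E_q(\varphi-\varphi^\star)^2+q(U^c).
\]
Writing $\mathbb E_q(\varphi-\varphi^\star)^2=\mathrm{Var}_q(\varphi)+(\mathbb E_q\varphi-\varphi^\star)^2$, the assumption $\mathrm{Var}_q(\varphi)=O(1/N)$ handles the first term. I will read ``posterior contraction'' as also covering the squared bias at rate $O(1/N)$ (a standard Bernstein--von Mises-type statement). Chebyshev then gives $q(U^c)\le \mathbb E_q(\varphi-\varphi^\star)^2/\delta^2=O(1/N)$, where $\delta$ is the radius of $U$. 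Together these yield $\mathrm{Var}_\varphi(r)=O(1/N)$.

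\emph{Quantile gap.} The aleatoric limit is $r^\star=r(\varphi^\star)$, and the conservative threshold is $r^\beta$. Let $s=\mathrm{std}_\varphi(r)$ and $\bar r=\mathbb E_\varphi r$.

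By Cantelli's inequality, $\mathbb P(r\le \bar r-ts)\le 1/(1+t^2)$. Choosing $t_\beta=\sqrt{(1-\beta)/\beta}$ makes this probability at most $\beta$, so $r^\beta\ge \bar r-t_\beta s$. Symmetrically, $\mathbb P(r\le \bar r+ts)\ge t^2/(1+t^2)\ge\beta$ once $t\ge\sqrt{\beta/(1-\beta)}$, so $r^\beta\le \bar r+t'_\beta s$. Thus $|r^\beta-\bar r|\le C_\beta s$.

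Next, $|\bar r-r^\star|\le \mathbb E|p_{\mathrm A}(\varphi)-p_{\mathrm A}(\varphi^\star)|$. This is at most $L\sqrt{\mathbb E(\varphi-\varphi^\star)^2}+q(U^c)=O(N^{-1/2})$ by the same split as above.

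Combining,
\[
\Delta(N)=|r^\beta-r^\star|\le C_\beta s+O(N^{-1/2})=O(s)+O(N^{-1/2})=O(N^{-1/2}).
\]
This is \eqref{eq:rate}. The constant $C_\beta$ is fixed because $\beta\in(0,1)$ is fixed.

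\emph{Main obstacle.} The delicate point is the offset $\bar r-r^\star$, which is not controlled by $\mathrm{std}_\varphi(r)$ alone. It needs the posterior to concentrate at $\varphi^\star$ in mean-square, i.e.\ bias included, together with a global bound on the tail outside the Lipschitz neighbourhood. I would make this explicit as the interpretation of the contraction hypothesis. That step is honestly the weakest, and the $O(\mathrm{std}_\varphi(r))$ form in \eqref{eq:rate} should be understood modulo this bias term. The empirical Clopper--Pearson slack in \eqref{eq:conservative} is separate: it is $O(M_\varphi^{-1/2})$ and is not part of $\Delta(N)$.
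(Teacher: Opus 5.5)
Your proof is correct and follows the same two-step route as the paper's sketch. The Lipschitz property transfers the $O(1/N)$ posterior variance to $\mathrm{Var}_\varphi(r)$, and the $\beta$-quantile is then placed within $O(\mathrm{std}_\varphi(r))$ of the limit, which you make rigorous via Cantelli. Your bias caveat is justified: the paper's sketch silently identifies the posterior centre with $\varphi^\star$, and without your added condition $|\mathbb E_q\varphi-\varphi^\star|=O(N^{-1/2})$ the gap $\Delta(N)$ can decay more slowly than $N^{-1/2}$ (e.g.\ like $N^{-1/4}$ if the bias does) even though $\mathrm{Var}_\varphi(r)=O(1/N)$.
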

\noindent\emph{Proof sketch.} Lipschitz $p_{\mathrm A}$ gives $\mathrm{Var}_\varphi(r)=\mathrm{Var}_\varphi(p_{\mathrm A})=O(\mathrm{Var}_q(\varphi))=O(1/N)$; the robust quantile differs from the $\varphi^\star$ value by $O(\mathrm{std}_\varphi(r))$. $\hfill\square$

The gap is governed by $\mathrm{Var}_\varphi(r)$, which reflects variability in \emph{both} the conditional mean and margin ($\mathrm{Var}_\varphi(v)$ can be nonzero even when $\mathrm{EU}=0$). When AU is the dominant contributor, the risk is intrinsic and requires physical mitigation, such as damping support or curtailment.

\section{Case Studies}
\begin{figure*}[!b]
	\centering
	\subfloat[Decomposition at $N=50$]{\includegraphics[height=1.3in]{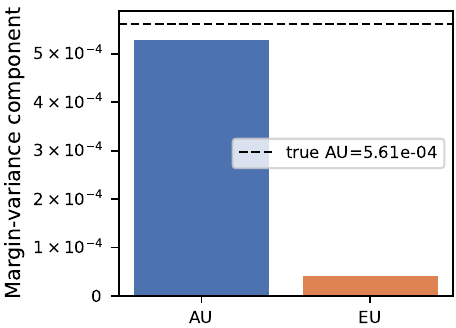}}
	\hfill
	\subfloat[EU shrinks, AU fixed]{\includegraphics[height=1.3in]{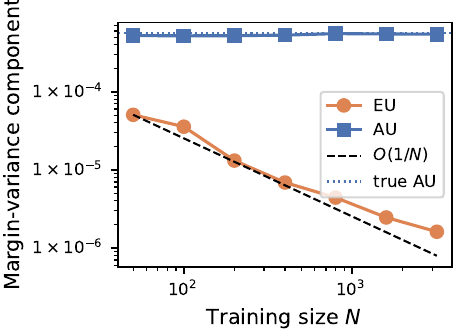}}
	\hfill
	\subfloat[Certificate slack contracts]{\includegraphics[height=1.3in]{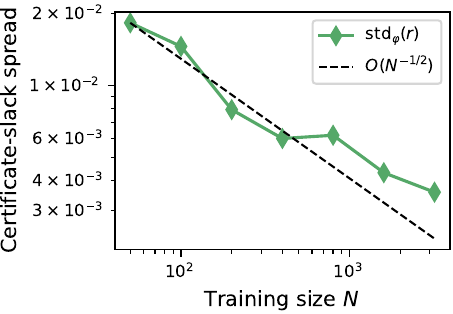}}
	\caption{Results, $8$-seed average. (a) At $N=50$, AU matches the ground-truth (dashed) within $5.7\%$; EU is the reducible $7.4\%$. (b) EU shrinks $\sim O(1/N)$ while AU stays at ground truth. (c) Slack spread $\mathrm{std}_\varphi(r)\sim O(N^{-1/2})$ (\emph{Proposition~\ref{prop:converge}}).}
	\label{fig:results}
	\vspace{-0.3cm}
\end{figure*}

\subsection{Test System and Case-Study Settings}
We use a reduced-order phase-locked-loop (PLL) grid-following converter (${\approx}28$~Hz) on a Th\'evenin grid. Wind is synthesised from a \emph{known} law (ground-truth $\sigma_w$ available); $\sigma_w$ is held fixed from climatological regression and the epistemic posterior is a $40$-member deep ensemble---sized for the outer Clopper--Pearson level, which needs $M_\varphi\gtrsim30$ to clear $1-\beta$ at confidence $1-\gamma$. We adopt a stringent damping-adequacy target $\zeta_{\min}=0.27$ and place the operating point just inside it, so wind variability yields a non-trivial instability probability. Table~\ref{tab:settings} lists the parameters.
\begin{table}[t]
	\centering
	\small
	\caption{Case study settings.}
	\label{tab:settings}
	\begin{tabular}{ll}
		\hline
		Grid model/SCR & Th\'evenin, SCR=$1.2$\\
		Damping-adequacy target & $\zeta_{\min}=0.27$ \\
		Decay-rate target & $\eta_{\min}=5.0$ \\
		Wind mean / $\sigma_w$ & $0.71$ / $0.071$ pu \\
		Process noise $\sigma_\varepsilon$ & $0.01$ \\
		Ensemble size $M_\varphi$ & $40$ \\
		$\alpha$ / $\beta$ / $1-\gamma$ & $0.10$ / $0.10$ / $0.95$ \\
		d-PCE order / $n_u$ & $8$--$10$ / $40000$ \\
		\hline
	\end{tabular}
\end{table}

\subsection{Study A: Disentanglement and Validation}
At $N=50$ ($8$-seed average), $\widehat{\mathrm{AU}}=5.3\times10^{-4}$ recovers the ground-truth $5.6\times10^{-4}$ within $5.7\%$ (Fig.~\ref{fig:results}(a)), with $e_{\mathrm{sur}}=4.9\times10^{-3}$; EU/TU$\,{=}\,7.4\%$ is the reducible share. The system is optimistically stable ($p_{\mathrm{pred}}=0.036\le\alpha$), but the Clopper--Pearson lower bound with $M_\varphi=40$ does not clear $0.90$ at $95\%$ confidence, motivating the data sweep in Study~B. 

\subsection{Study B: Forecast--Coupled Stability}
Fig.~\ref{fig:results}~(b) and~(c) sweep $N$ training data samples. EU falls from $5.1\times10^{-5}$ ($N=50$) to $1.6\times10^{-6}$ ($N=3200$) tracking $O(1/N)$, while AU holds at the ground truth. A run passes when the Clopper--Pearson bound \eqref{eq:conservative} clears $1-\beta$ at confidence $1-\gamma$. By this standard, the certificate flips fail$\to$pass as $N$ grows, with the across-seed pass rate rising $0.25\to0.75\to1.0$, and $\mathrm{std}_\varphi(r)$ contracts from $1.8\times10^{-2}$ to $3.6\times10^{-3}$, consistent with $O(N^{-1/2})$ (\emph{Proposition~\ref{prop:converge}}). This indicates that when EU dominates, better forecasting can restore certification; when AU dominates, improvements on physical systems are required, such as more effective control or load curtailment.

\section{Conclusion}
Using an independent \emph{germ}, the proposed d-PCE offers an exact aleatoric–epistemic split of the modal stability margin, enabling a conservative $(\alpha,\beta,\gamma)$ certificate with finite-sample confidence that converges to the irreducible aleatoric limit at $O(N^{-1/2})$.  The resulting certificate provides a stability assessment with confidence and indicates  whether instability is due to forecast uncertainty at the operating point or must be addressed through physical system measures.

\end{document}